\def\ps@pprintTitle{%
\let\@oddhead\@empty
\let\@evenhead\@empty
\def\@oddfoot{\centerline{\thepage}}%
\let\@evenfoot\@oddfoot}
\journal{Physica D}
\DeclareMathOperator*{\loggrad}{log-grad}
\newtheorem{theorem}{Theorem}[section]
\newtheorem{thm-defn}[theorem]{Theorem/Definition}
\newtheorem{prop}[theorem]{Proposition}
\theoremstyle{definition}
\theoremstyle{remark}
\newcommand{\ignore}[1]{}{}
\begin{document}

\begin{frontmatter}

\title{Comparing the dynamics of COVID-19 infection and mortality in the United States, India, and Brazil}
   
\author[label1]{Nick James} \ead{nick.james@unimelb.edu.au}
\author[label2]{Max Menzies} \ead{max.menzies@alumni.harvard.edu}
\author[label1]{Howard Bondell}
\address[label1]{School of Mathematics and Statistics, University of Melbourne, Victoria, Australia}
\address[label2]{Beijing Institute of Mathematical Sciences and Applications, Tsinghua University, Beijing, China}

\begin{abstract}
This paper compares and contrasts the spread and impact of COVID-19 in the three countries most heavily impacted by the pandemic: the United States (US), India and Brazil. All three of these countries have a federal structure, in which the individual states have largely determined the response to the pandemic. Thus, we perform an extensive analysis of the individual states of these three countries to determine patterns of similarity within each. First, we analyse structural similarity and anomalies in the trajectories of cases and deaths as multivariate time series. Next, we study the lengths of the different waves of the virus outbreaks across the three countries and their states. Finally, we investigate suitable time offsets between cases and deaths as a function of the distinct outbreak waves. In all these analyses, we consistently reveal more characteristically distinct behaviour between US and Indian states, while Brazilian states exhibit less structure in their wave behaviour and changing progression between cases and deaths.

\end{abstract}

\begin{keyword}
COVID-19 \sep Time series analysis \sep Population dynamics \sep Nonlinear dynamics \sep Federal states

\end{keyword}

\end{frontmatter}

%------------BODY OF PAPER ------------------
\section{Introduction}
\label{sec:Introduction}

%1. Introduction to the three countries
The United States (US), India and Brazil have each been severely impacted by COVID-19 and lead the world in both case and death counts. While the three countries have quite different cultures and levels of economic and technological development, they each have a similar federation structure, with governing responsibilities divided between federal and state governments. In all three countries, government responses have consistently differed between constituent states and over time \cite{Haffajee2020,daSilva2021,india}, yielding different levels of virus transmission and impact on communities. Thus, a careful analysis of the most and least successful states is of great relevance to a response to the ongoing threat of COVID-19. Moreover, it is worthwhile to compare and contrast the state-by-state behaviours of the pandemic between the three countries as a whole.

%2. Scientific response
In the US, India and Brazil, as well as throughout the world, the scientific response to COVID-19 has been as multifaceted and as significant as the government response. Medical researchers have uncovered numerous means of treating infections \cite{Remdesivir,Bloch2020,toczilizumab,Cao2020}, culminating in the production of vaccines \cite{Polack2020,Walsh2020}. Outside the medical field, analytical approaches to model and study the virus and its impact have been broad. First, many models based on existing mathematical models, such as the Susceptible–Infected–Recovered (SIR) model and the reproductive ratio $R_0$, have been proposed and systematically collated by researchers \cite{Wynants2020,ModellingEstrada2020}. These have been utilised for various purposes, including diagnosis and prognosis of COVID-19 patients, studies of the efficacy of medications, and vaccine development. Next, nonlinear dynamics researchers have proposed several sophisticated extensions to the classical predictive SIR model, including analytic techniques to find explicit solutions \cite{SIRBarlow2020, SIRWeinstein2020}, modifications to the SIR model with additional variables \cite{SIRNg2020,SIRVyasarayani2020,SIRCadoni2020,SIRNeves2020,SIRComunian2020,Sun2020}, incorporation of Hamiltonian dynamics \cite{SIRBallesteros2020} or network models \cite{SIRLiu2021}, and a closer analysis of uncertainty in the SIR equations \cite{Gatto2021}. Other mathematical approaches to prediction and analysis include power-law models \cite{Manchein2020,Blasius2020,Beare2020}, forecasting models \cite{Perc2020}, fractal approaches \cite{Boccaletti2020,Castillo2020,Castillo2021}, neural networks \cite{Melin2020}, Bayesian methods \cite{Manevski2020}, distance analysis \cite{James2021_virulence}, network models \cite{Shang2020, Karaivanov2020,Ge2020,Xue2020}, analyses of the dynamics of transmission and contact \cite{Saldaa2020,Danchin2021},  clustering \cite{Machado2020,jamescovideu} and many others \cite{Ngonghala2020,Cavataio2021,james2021_mobility,Nraigh2020,Glass2020}. Finally, numerous articles have been devoted to understanding the spatial components of the virus' spread, in numerous countries \cite{Zhou2020_covidUS,Melin2020_2,Wang2020_spatioUS,James2021_geodesicWasserstein}.

%Motivation - NEW
We have a different motivation and approach relative to the aforementioned work. Numerous works have studied trends in COVID-19 prevalence on a country-by-country basis \cite{Jamesfincovid} or state-by-state basis, frequently within the US or Brazil \cite{daSilva2021,james2020covidusa}. However, we are unaware of any work to consider more than one federation of states at once. We were motivated to compare the US, India and Brazil for several reasons. First, these are the three countries most impacted by COVID-19, both in case and death counts. Secondly, the level of human development varies drastically from country to country, but less so within each federation of states. Third, during the COVID-19 pandemic, international movement drastically decreased, leaving such large federal states almost as self-contained regions in which COVID-19 spread independently from what was occurring in other countries. Thus, tracking the heterogeneity of COVID-19 prevalence and behaviour within and between federations could be used to distinguish the effects of policies at the state and federal level. For example, countries whose federal government had less of a policy role could see more heterogeneity of behaviours with states, if states implemented drastically differing policies.

%MAYBE MOVE TO CONCLUSION

This work could assist various researchers in different fields. Analysing and predicting the spread of COVID-19 is consistently challenging due to the inability to establish true control groups; indeed, it is practically impossible to split entire countries into different regions where certain mitigation measures are or are not implemented. By comparing states within and between different federations, policy researchers can approximate the existence of control groups, and investigate which socioeconomic features and interventions were associated with better and worse outcomes. For policymakers, a comparison of different states within each federation can provide opportunities for state governments to learn from each other's triumphs and setbacks. Across the three countries, this analysis could reveal relationships between COVID-19 spread and the intervention of the national government or the underlying level of economic development.

%4. Structure - NEW
This paper is structured in such a way as to thoroughly investigate numerous aspects of the spread and human cost of COVID-19 in the three federations. First, Section \ref{sec:Trajectories} investigates the structural similarity and anomalies in the trajectories of cases, deaths and rolling mortality rate on a state-by-state basis in the three countries. We explore commonalities in virus behaviour within the three countries as well as the extent of heterogeneity across each country as a whole. Next, Section \ref{sec:Turningpoints} performs a closer analysis of a highly significant aspect of COVID-19 epidemiology: differing waves of the outbreak. Using a newly introduced turning point algorithm and distance between finite sets, we perform clustering on all the individual states of the US, India and Brazil to identify characteristic wave behaviours across the entire collection. Finally, Section \ref{sec:Offsets} draws upon the previous two sections to address a highly pertinent metric - the average progression between cases and deaths. This paper introduces a variety of novel optimisation methods to estimate this, and takes a new approach, separating this feature according to the mathematically determined waves of the pandemic. We employ five different optimization methods, each of which uses state-by-state data, to estimate an appropriate offset between case and death time series for the US, India and Brazil as a whole. This allows us to track the changing nature of COVID-19 mortality among the different waves of the pandemic. We summarise all our findings and insights in Section \ref{sec:conclusion}.

In addition to the above motivation and specific questions we study, the methodologies used in this paper have applicability well beyond the COVID-19 pandemic, and could be used in any setting of multivariate time series. In particular, Section \ref{sec:Trajectories} presents a new approach to carefully quantify the extent of heterogeneity in a multivariate time series (or in other spaces more generally) that handle the existence of outlier elements well, while Section \ref{sec:Offsets} could be used to study various other time series where lagging is to be expected. Given the fourth wave of COVID-19 that Europe is currently facing, scientists should seek to learn from the countries most severely impacted by COVID-19, and their prior waves of COVID-19 cases. This manuscript provides computational tools and findings that would be of great relevance to this audience.

\section{Trajectory analysis, structural similarity and anomaly detection}
\label{sec:Trajectories}

In this section, we explore the similarity and structure between case, death and rolling mortality time series for the US, India and Brazil. Our data spans 26 Feb 2020 to 23 May 2021, a period of $T=452$ days. For each country, let the multivariate time series of new COVID-19 cases and deaths be $x_i(t)$ and $y_i(t)$, where $t=1,...,T$ indexes the days and $i = 1,...,N$ indexes states under consideration. Throughout this manuscript, we will examine either one country at a time, with $N=51$ states (including the District of Columbia) for the US, $N=36$ states (including union territories) for India, $N=27$ states (including the Federal District) for Brazil, or the entire collection of individual states, with $N=114$.

In addition, we define a 30-day rolling mortality rate for each state as follows:
\begin{equation}
    r_i(t) = \frac{\sum^{t}_{j=t-29} y_i(j)}{\sum^{t}_{j=t-29} x_i(j)}, \quad t=30,...,T.
\end{equation}
We wish to examine the three aforementioned multivariate time series to determine the structure and degree of heterogeneity within each country's states and collectively, between all countries' underlying states. To a case time series $x_i(t), t=1,...,T$ we associate the following probability distribution:
\begin{align}
\label{eq:associateddistribution}
    f_i=\frac{1}{\sum_{s=1}^T x_i(s) } \sum_{t=1}^T x_i(t) \delta_t,
\end{align}
where $\delta_t$ is the Dirac delta distribution at $t$. That is, $f_i$ is a distribution that apportions to day $t$ the weight of the new cases observed on that day as a proportion of the total cases across the whole period. Then, we define
\begin{align}
    M^{C}_{ij}=W_1(f_i,f_j),
\end{align}
where $W_1$ is the $L^1$-Wasserstein metric \cite{DelBarrio} between distributions on $\mathbb{R}$. Analogously, we associate distributions $g_i$ and $h_i$ to death and mortality time series $y_i$ and $r_i$, respectively. We define \emph{trajectory distance matrices} between  state trajectories for deaths and mortality analogously as follows:
\begin{align}
   % M^C_{ij} &= W_1(f_i,f_j); \\
    M^D_{ij} &= W_1(g_i,g_j); \\
    M^R_{ij} &= W_1(h_i,h_j).
\end{align}
This distance has several advantageous properties over previously used discrepancy measures between normalised trajectories. Previous work \cite{james2020covidusa} has used the $L^1$ norm and metric between normalised trajectories, defined as follows:
\begin{align}
\|x_i\|_1= \sum_{t=1}^T x_i(t) \\
\mathbf{v}_i=\frac{x_i}{\|x_i \|_1}\\
  d_{ij}=  \|\mathbf{v}_i - \mathbf{v}_j\|_1
%  \left\| \frac{x_i}{\|x_i \|_1}  - \frac{x_j}{\|x_j \|_1}\right\|_1.
\end{align}
This treats each time series $x_i(t)$ as a vector in $\mathbb{R}^T$, normalises by its $L^1$ norm, and compares these normalised vectors with the $L^1$ metric \cite{Minkowski}. This distance is suitable in most instances but has some undesirable properties when quantifying discrepancy between noisy time series. Specifically, this $L^1$ distance $d_{ij}$ has maximal possible value equal to 2 when $x_i(t)$ and $x_j(t)$ have disjoint support. Practically, this would mean that two states' trajectories would receive a large $L^1$ discrepancy measure if the cases were simply reported to fall on different days. For example, if state $i$ and state $j$ had broadly similar trends in cases, but in state $i$ cases were reported more on Mondays and Wednesdays while state $j$ reported more on Tuesdays and Thursdays, then the $L^1$ distance measure would be larger than their similarity. Smoothing and 7-day averaging can resolve some of these issues, but the Wasserstein metric ameliorates this issue even more, as it is robust to small translations of distributions. That is, if $f$ is a distribution and $f_\delta(x)=f(x+\delta)$, then $W_1(f,f_\delta)=|\delta|$, as shown in \cite{James2020_nsm}. This means the Wasserstein metric assigns a low value in the case that states $i$ and $j$ have similar trajectories where cases just fall on nearby but distinct days.

We will examine the matrices defined above ($M^C, M^D$ and $M^R$) for each individual country (with $N=51$ for the US, 36 for India, 27 for Brazil) as well as the entire collection of states, with $N=114$. In Figure \ref{fig:TrajectoryDistanceMatrices}, we display the matrices $M^C, M^D$, and $M^R$ each for the totality of the collection. In Table \ref{tab:Trajectory_norm_table}, we record the $L^1$-norms $\| M^C \|,\| M^D \|$, and  $\| M^R \|$ each restricted to one of the three federations. For example, for the US, $M^C, M^D$ and  $M^R$ are $51 \times 51$ matrices, whereas they are $36 \times 36$ matrices for India. For an $N\times N$ matrix $A$, we define its norm by
\begin{align}
\label{eq:matrixnorm}
    \|A\|=\frac{1}{N(N-1)} \sum_{i,j=1}^{N} |A_{ij}|.
\end{align}
This calculates a total magnitude of the matrix, appropriately normalised for the number of non-zero elements. For our distance matrices $M^C, M^D$, and $M^R$, these norms reflect the heterogeneity among trajectories within each country. As the Wasserstein distance is taken between appropriately normalised distributions, it is possible to compare between case, death and mortality time series. Due to the normalisation coefficient, it is possible to compare this between different countries.

Table \ref{tab:Trajectory_norm_table} reveals that India exhibits the highest heterogeneity between states regarding all three behaviours, with norms of 40.09, 55.08 and 76.45 for cases, deaths and mortality, respectively. For case trajectories, the US and Brazil have similar levels of total homogeneity. For deaths and mortality trajectories, however, Brazil's norms of 35.33 and 29.67 are rather less than the US' scores of 43.70 and 43.03. This highlights the relative homogeneity in death and mortality trajectories among Brazilian states. 

\begin{figure*}
    \centering
    \begin{subfigure}[b]{0.49\textwidth}
        \includegraphics[width=\textwidth]{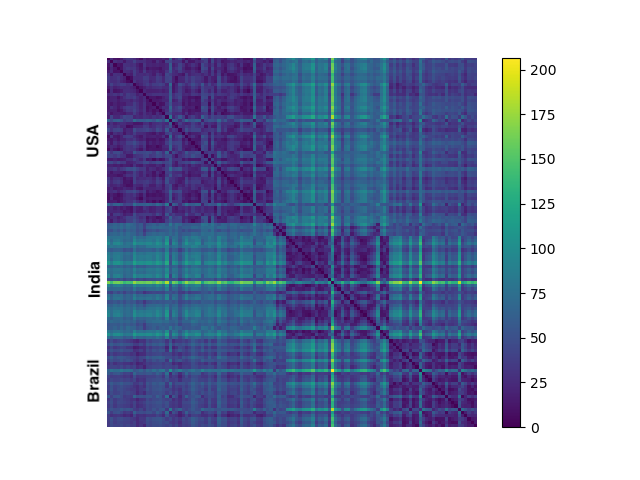}
        \caption{}
        \label{fig:Cases_trajectory}
    \end{subfigure}
    \begin{subfigure}[b]{0.49\textwidth}
        \includegraphics[width=\textwidth]{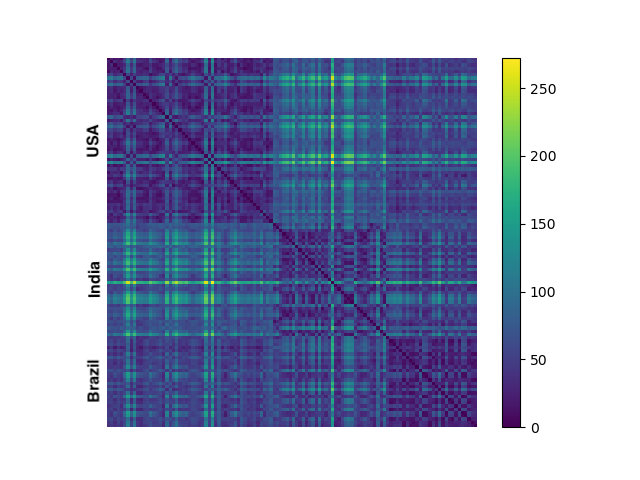}
        \caption{}
        \label{fig:Deaths_trajectory}
    \end{subfigure}
    \begin{subfigure}[b]{0.49\textwidth}
        \includegraphics[width=\textwidth]{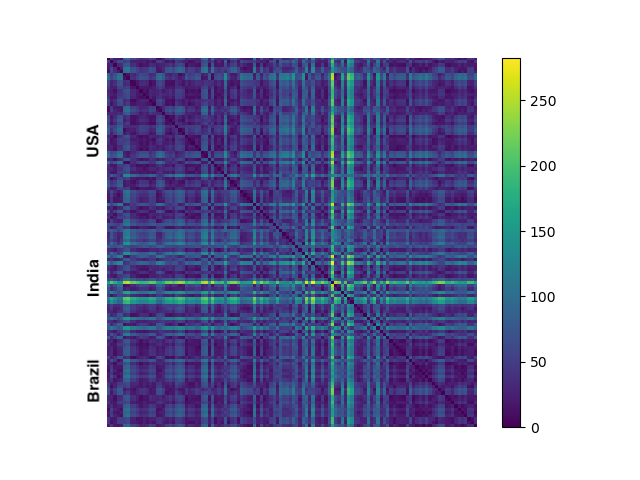}
        \caption{}
        \label{fig:Mortality_trajectory}
    \end{subfigure}
    \caption{Trajectory distance matrices, as defined in Section \ref{sec:Trajectories}, with respect to (a) cases, (b) deaths and (c) mortality rate time series. Each matrix is computed using the entire collection of 114 states and ordered with US states first, then Indian states, then Brazil. Darker values indicate smaller entries of the matrix, signifying greater similarity between states. India exhibits particular heterogeneity between mortality rates.}
    \label{fig:TrajectoryDistanceMatrices}
\end{figure*}

\begin{table}
\begin{center}
\begin{tabular}{ |p{2cm}||p{1.5cm}|p{1,5cm}|p{2.5cm}|}
 \hline
 \multicolumn{4}{|c|}{Trajectory distance matrix norms} \\
 \hline
 Country & Cases & Deaths & Mortality rate \\
 \hline
 US & 27.39 & 43.70 & 43.03   \\
 India & 40.09 & 55.08 & 76.45   \\
 Brazil & 30.30 & 35.33 & 29.67 \\  
\hline
\end{tabular}
\caption{Normalised matrix norms $\|M^C\|, \|M^D\|$, and  $\|M^R\|$, as defined in Section \ref{sec:Trajectories}, for each of the three countries with respect to case, death and mortality time series. The higher values for India indicate greater heterogeneity between its states.}
\label{tab:Trajectory_norm_table}
\end{center}
\end{table}

Next, we wish to further examine the heterogeneity between states of each country, as well as identify the presence of any outlier states that may be influencing the total norms recorded in Table \ref{tab:Trajectory_norm_table}. Given each country's trajectory matrix $M$ (with respect to cases, deaths or mortality rates), we perform the following procedure to sequentially identify the most anomalous state, remove it, and compute the resulting norm of the reduced collection. This is described in Algorithm \ref{algorithm}.

\begin{algorithm}[H]%[tb]
	\caption{Anomalous trajectory identification}
	\label{algorithm}
	\begin{algorithmic}[1]		
		\State Input: a distance matrix $M$ between all states for a candidate country.
		\State Initialise empty lists for norm scores and state anomaly ranking.
		\State Set matrix $M$ to current iteration, $M^c$.
		\For{$r=0 \,\text{ to }\, N-1$}
		\State Compute number of rows, $m^c = M^c - r$, in current distance matrix, $M^c$.
        \State Generate current norm of matrix $M^c$, $\nu^c = \frac{1}{m^c (m^c-1)} \sum_{i,j=1}^{m^c} |M^c_{ij}|$.
        \State Append $\nu^c$ to norm scores list.
		\For{$l=1 \,\text{ to }\, m^c$}
		\State Compute $a_l = \sum_{i=1}^{m^c} |M^c_{il}|$.
		\EndFor 
		\State Let $a_k=\max_l \{ a_l \}$
		\State Append name of state $k$ to state anomaly ranking list.
		\State Update matrix $M^c$ by removing $k$th column and row corresponding to state with highest anomaly score, $a_k$.
		\EndFor 
		\State Generate norm scores trajectory and state anomaly ranking.
	\end{algorithmic}
\end{algorithm}

\begin{figure*}
    \centering
    \begin{subfigure}[b]{0.325\textwidth}
        \includegraphics[width=\textwidth]{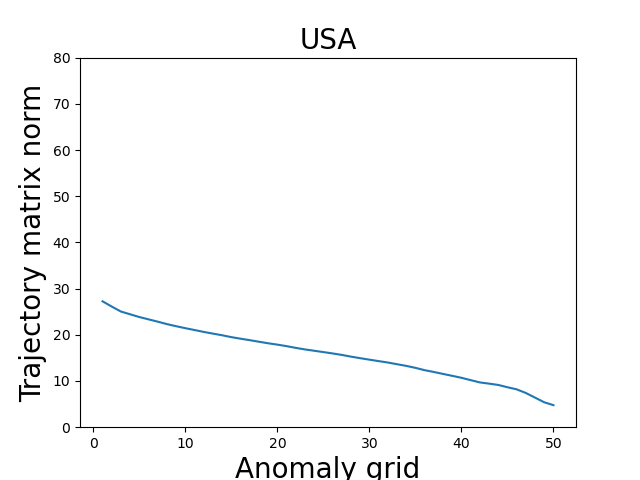}
        \caption{}
        \label{fig:USA_sequential_cases}
    \end{subfigure}
    \begin{subfigure}[b]{0.325\textwidth}
        \includegraphics[width=\textwidth]{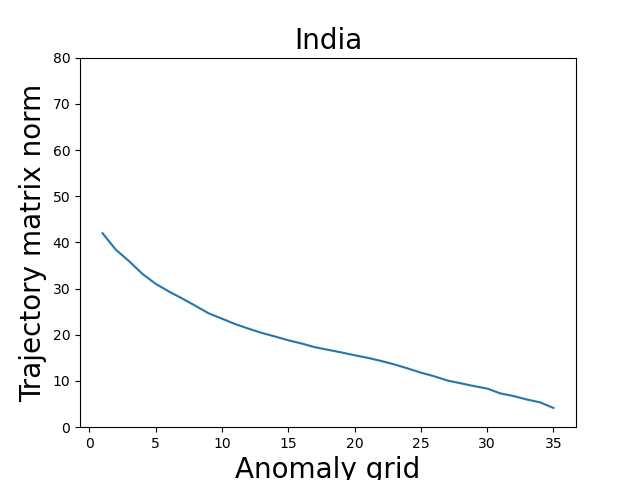}
        \caption{}
        \label{fig:India_sequential_cases}
    \end{subfigure}
    \begin{subfigure}[b]{0.325\textwidth}
        \includegraphics[width=\textwidth]{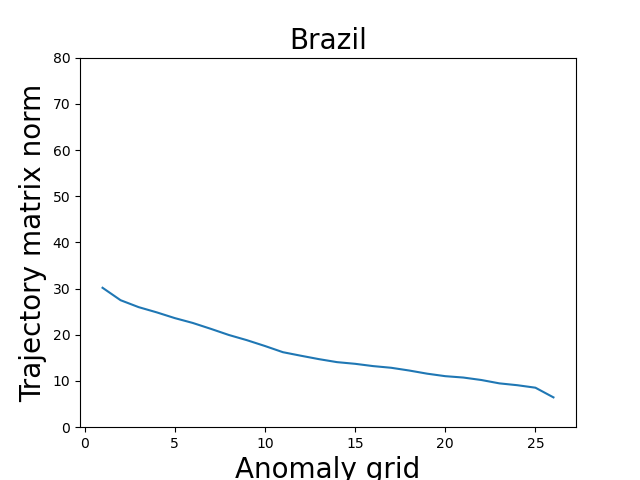}
        \caption{}
        \label{fig:Brazil_sequential_cases}
    \end{subfigure}
    \begin{subfigure}[b]{0.325\textwidth}
        \includegraphics[width=\textwidth]{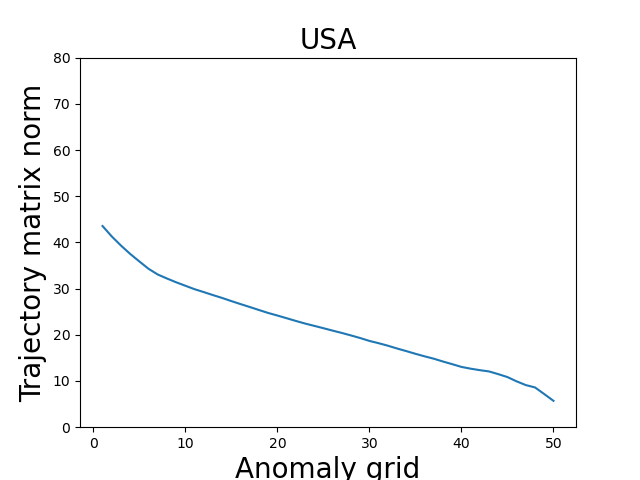}
        \caption{}
        \label{fig:USA_sequential_deaths}
    \end{subfigure}
    \begin{subfigure}[b]{0.325\textwidth}
        \includegraphics[width=\textwidth]{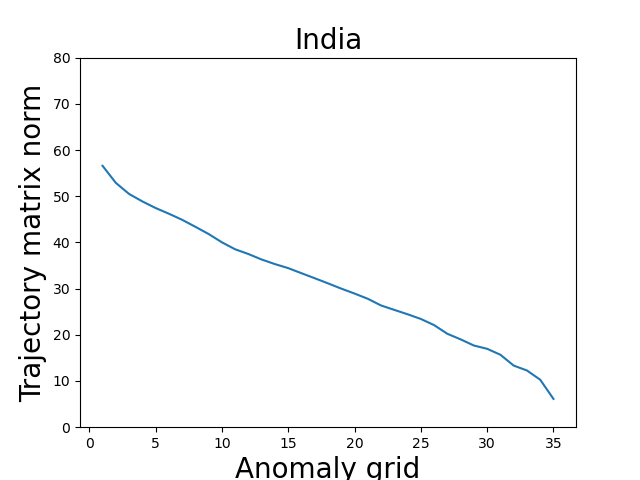}
        \caption{}
        \label{fig:India_sequential_deaths}
    \end{subfigure}
    \begin{subfigure}[b]{0.325\textwidth}
        \includegraphics[width=\textwidth]{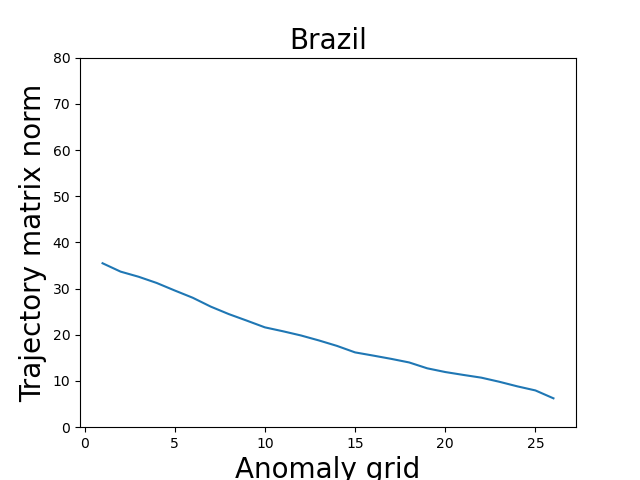}
        \caption{}
        \label{fig:Brazil_sequential_deaths}
    \end{subfigure}
    \begin{subfigure}[b]{0.325\textwidth}
        \includegraphics[width=\textwidth]{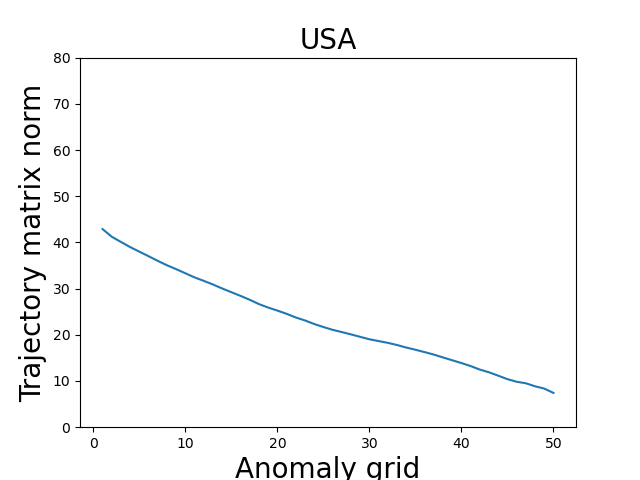}
        \caption{}
        \label{fig:USA_sequential_rates}
    \end{subfigure}
    \begin{subfigure}[b]{0.325\textwidth}
        \includegraphics[width=\textwidth]{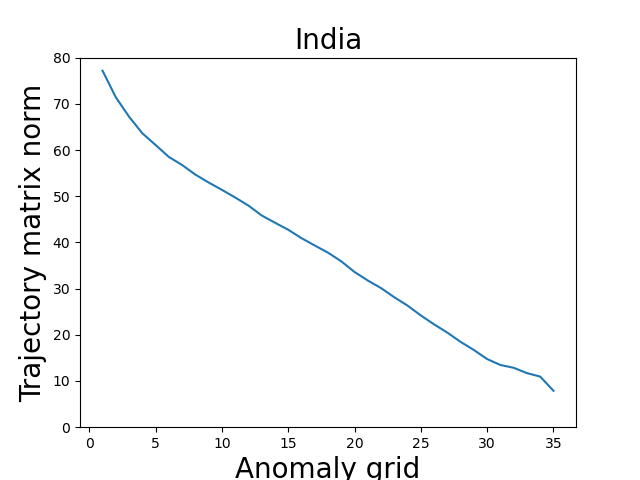}
        \caption{}
        \label{fig:India_sequential_rates}
    \end{subfigure}
    \begin{subfigure}[b]{0.325\textwidth}
        \includegraphics[width=\textwidth]{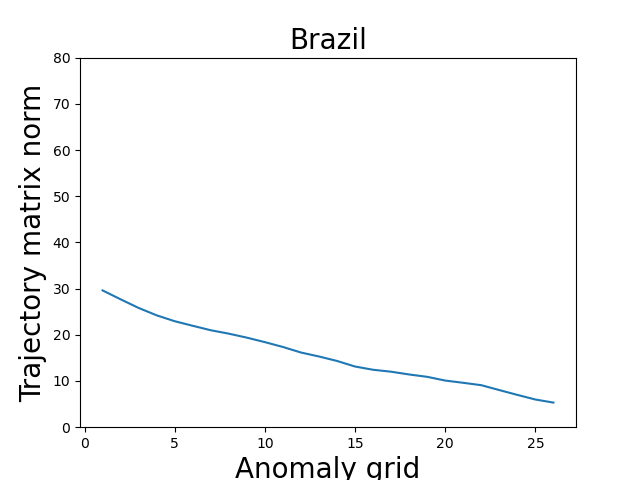}
        \caption{}
        \label{fig:Brazil_sequential_rates}
    \end{subfigure}
    \caption{Sequences of decreasing matrix norms $\nu^c$, as determined in Algorithm \ref{algorithm}, for (a) US cases (b) Indian cases (c) Brazilian cases (d) US deaths (e) Indian deaths (f) Brazilian deaths (g) US mortality rates (h) Indian mortality (i) Brazilian mortality. These norms are obtained sequentially after removing the most anomalous country at each step. India exhibits sharper drops at the start, indicating a small collection of highly anomalous states, particularly for mortality rates.}
    \label{fig:Sequential_anomaly_removal_cases}
\end{figure*}

In Figure \ref{fig:Sequential_anomaly_removal_cases}, we display the sequence of norm scores $\nu^c$ for each matrix $M^C, M^D$, and $M^R$ for the US, India and Brazil. By removing the greatest $a_k$ in each step of the algorithm, this sequence of norm scores is necessarily decreasing. As all norms are appropriately normalised, we may compare these decreasing sequences between all our different countries and time series. Several insights can be gained from these figures. First, India consistently produces the largest anomaly score for all three attributes. This can be seen by the magnitude of the decreasing trend for India throughout the plots. This is consistent with the analysis in Table \ref{tab:Trajectory_norm_table}, but ensures that it is not due simply to the presence of a small number of outlier states. Second, relative to cases and deaths, mortality rate trajectories are significantly more dissimilar in the case of India. For the US and Brazil, there is greater uniformity in anomaly trajectories among each of the three attributes. When examining the nine sequential norm trajectories, it is pertinent to look for sharp drops, which would indicate that a particular state accounts for a disproportionate amount of heterogeneity. This effect is seen in the Indian mortality rate norms (Figure \ref{fig:India_sequential_deaths}) and to a lesser extent in the cases and deaths norms, (Figures \ref{fig:India_sequential_cases} and Figure \ref{fig:India_sequential_deaths}, respectively).

Table \ref{tab:State_anomaly_table} records the five most anomalous states in each country with respect to cases, deaths and mortality rates, as determined by Algorithm \ref{algorithm}, and also reveals several insights. In the US, there is a pronounced geographic trend in all three attributes' anomaly trajectories. Northeastern states New York, New Jersey, Connecticut and Vermont are identified as anomalous in at least two attributes' trajectories each. Several other Northeastern states appear, such as New Hampshire, Maine, Massachusetts and DC. In addition, there is substantial consistency in the states exhibiting anomalous behaviours in cases, deaths and mortality. In India, the state Lakshadweep is the most anomalous in cases, deaths and mortality, but otherwise relatively less repetition is observed among the most anomalous states. Lakshadweep's status as an anomaly can also explain the sharp drops observed for India in Figure \ref{fig:Sequential_anomaly_removal_cases}, but not for the US or Brazil. Brazil exhibits even greater variability in the most anomalous states than the US or India, with little consistency in the states exhibiting anomalous behaviours among cases, deaths and mortality.

\begin{table}
\begin{center}
\begin{tabular}{ |p{1.3cm}|p{4.2cm}|p{3.8cm}|p{3cm}|}
%  \hline
%  \multicolumn{4}{|c|}{Country and June-July density changes} \\
 \hline
 Country & Cases & Deaths & Mortality \\
 \hline
 US$_1$ & Vermont & New York & Oklahoma \\
 US$_2$ & Maine & New Jersey & Vermont \\
 US$_3$ & New Hampshire & Connecticut & New Jersey \\
 US$_4$ & New York & DC & Connecticut \\
 US$_5$ & Michigan & Massachusetts & New York \\
 India$_1$ & Lakshadweep & Lakshadweep & Lakshadweep \\
 India$_2$ & Andaman \& Nicobar Islands & Tripura & Mizoram \\
 India$_3$ & Tripura & Andhra Pradesh & Nagaland \\
 India$_4$ & Arunachal Pradesh & Odisha & Himachal Pradesh \\
 India$_5$ & Assam & Dadra and Nagar Haveli & Gujarat \\
 Brazil$_1$ & Maranh{\~a}o & Pernambuco & Pernambuco \\
 Brazil$_2$ & Roraima & Paran{\'a} & Piau{\'i} \\
 Brazil$_3$ & Amap{\'a} & Minas Gerais & Cear{\'a} \\
 Brazil$_4$ & Distrito Federal & Rio Grande do Sul & Distrito Federal \\
 Brazil$_5$ & Minas Gerais & Santa Catarina & Para{\'i}ba \\
\hline
\end{tabular}
\caption{The five most anomalous states in each country with respect to case, death and mortality rate time series, as determined by Algorithm \ref{algorithm}.}
\label{tab:State_anomaly_table}
\end{center}
\end{table}

\section{Wave behaviour analysis}
\label{sec:Turningpoints}

In this section, we investigate one of the most significant aspects of the spread of COVID-19, the tendency for the virus to exhibit multiple distinct waves of prevalence. As in the last section, we analyse either each country on a state-by-state basis (with $N=51, 36,$ and  $27$ states) or the entire collection of states across the three countries together ($N=114$ states). To each state, we apply a newly introduced turning point algorithm \cite{james2020covidusa} to identify non-trivial local maxima (peaks) and minima (troughs) in the new case time series. 

We first apply a \emph{Savitzky-Golay filter} to each new case time series $x_i(t)$ to generate a smoothed collection of time series $\hat{x}_i(t)$, $t=1,...,T$ and $i=1,...,N$. We then apply a two-stage turning point algorithm, detailed in \ref{appendix:turningpoint}, to generate non-empty sets $P_i$ and $T_i$ of non-trivial local maxima (peaks) and local minima (troughs), respectively. These turning points alternate between a trough and peak, beginning with a trough at $t=1$, when there are no cases.  

Next, we use an appropriate distance measure to quantify the similarity between two sets of turning points. We apply the semi-metric first introduced in \cite{James2020_nsm}. Given two non-empty finite sets $A,B$, this is defined as
\begin{align}
\label{eq:MJ}
    D({A},{B}) = \frac{1}{2} \left(\frac{\sum_{b\in B} d(b,A)}{|B|} + \frac{\sum_{a \in {A}} d(a,B)}{|A|} \right),
\end{align}
where $d(b,A)$ is the minimal distance from $b \in B$ to the set $A$. The distance measure $D(A,B)$ is symmetric, non-negative, and zero if and only if $A=B$. We then define $N \times N$ turning point distance matrices $M^{TP}$ by
\begin{align}
\label{eq:DTPmatrix}
    M_{ij}^{TP} = D(P_i,P_j) + D(T_i,T_j).
\end{align}
As before, this may be computed for the entire collection ($N=114$) or one specific country. In Figures \ref{fig:USA_TP_dendrogram}, \ref{fig:India_TP_dendrogram} and \ref{fig:Brazil_TP_dendrogram}, respectively, we display hierarchical clustering on the three obtained turning point matrices $M^{TP}$ restricted to the states of the US, India and Brazil separately.

Examining these three dendrograms reveals a similar cluster structure between the US and India. Both countries display a dense majority cluster and a small collection of outlier states. Brazil, by contrast, exhibits quite a different structure, with two similarly sized clusters that contain the majority of elements, and then some outliers. We can further examine the cluster-split behaviour of Brazil by examining the results of clustering all $N=114$ states in our collection in Figure \ref{fig:TP_MJ_Dend}. This total dendrogram contains a majority cluster containing $\sim$ 90\% of all states, and two small outlier clusters of five and four states (clusters B and C respectively). The majority cluster contains two subclusters (A1 and A2), featuring a break between US and Indian states, with almost no intersection between the two countries. However, Brazil's states are far more widely distributed. Not only do the outlier clusters B and C consist only of Brazilian states, but Brazil's states are spread throughout both A1 and A2, interleaving between US and Indian states. This finding suggests that US and Indian states exhibit higher intra-collection homogeneity and inter-collection heterogeneity in their wave behaviours when compared to Brazilian states.

To elucidate the reasons behind these state clustering patterns, we study the distribution of the location of the first non-trivial trough, $T_1$. This trough indicates the end of the first wave; thus, the value $T_1$ gives the total length of the first wave in each state. Table \ref{tab:TP1_table} documents the median and standard deviation of $T_1$ among each country's states, while Figure \ref{fig:Distribution_TP_1} displays kernel density estimates of the full distribution of values. There is significant variability between the states' first wave lengths between the three countries. The US has a median value of 92 and a standard deviation of 76.9, indicating that most states experienced a short first wave. By contrast, Indian states mostly experienced a long first wave, with a median value of 231 and a standard deviation of 63.6. This suggests that the first wave of COVID-19 cases in Indian states was on average 2.5 times longer than US states, with limited variance between states. As in Figure \ref{fig:TurningPointDendrograms}, Brazil does not exhibit as strong a characteristic behaviour, with a median $T_1$ score of 143 and a significantly higher standard deviation among Brazilian states of 109. Notably, the median $T_1$ value of Brazilian states is located between the US and Indian median values. Also of note is the highly skewed distribution for the Brazilian states, with a substantial number of high values despite the relatively lower peak. When viewed in conjunction with Figure \ref{fig:TP_MJ_Dend}, one can see how the heterogeneous turning point behaviours of Brazilian states are classified into predominantly US or Indian subclusters (A1 and A2, respectively). Figure \ref{fig:Distribution_TP_1} shows in more detail that the lengths of the first wave among Brazilian states are broadly positioned between those of US and Indian states.

\begin{figure*}
    \centering
    \begin{subfigure}[b]{0.7\textwidth}
        \includegraphics[width=\textwidth]{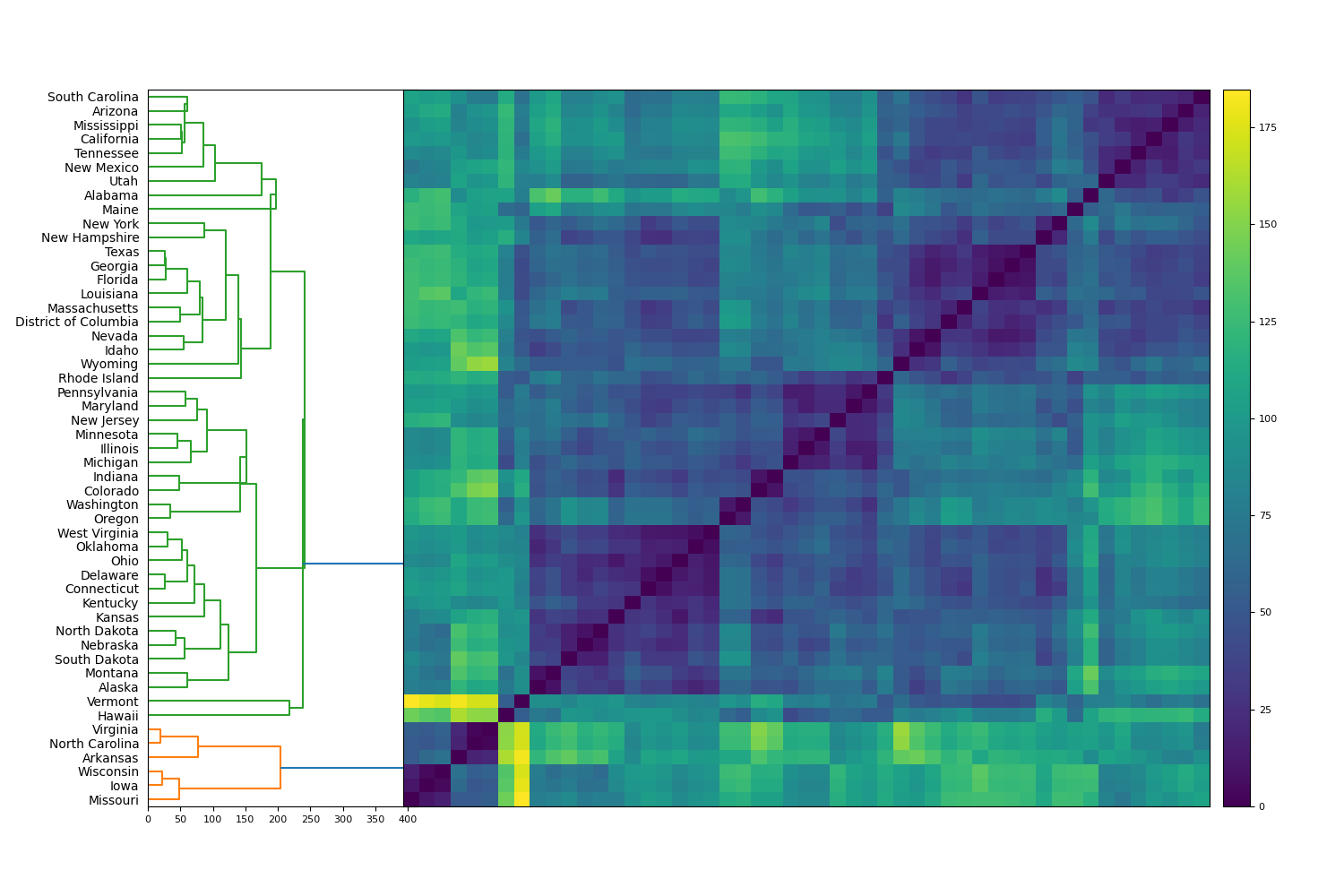}
        \caption{}
        \label{fig:USA_TP_dendrogram}
    \end{subfigure}
    \begin{subfigure}[b]{0.7\textwidth}
        \includegraphics[width=\textwidth]{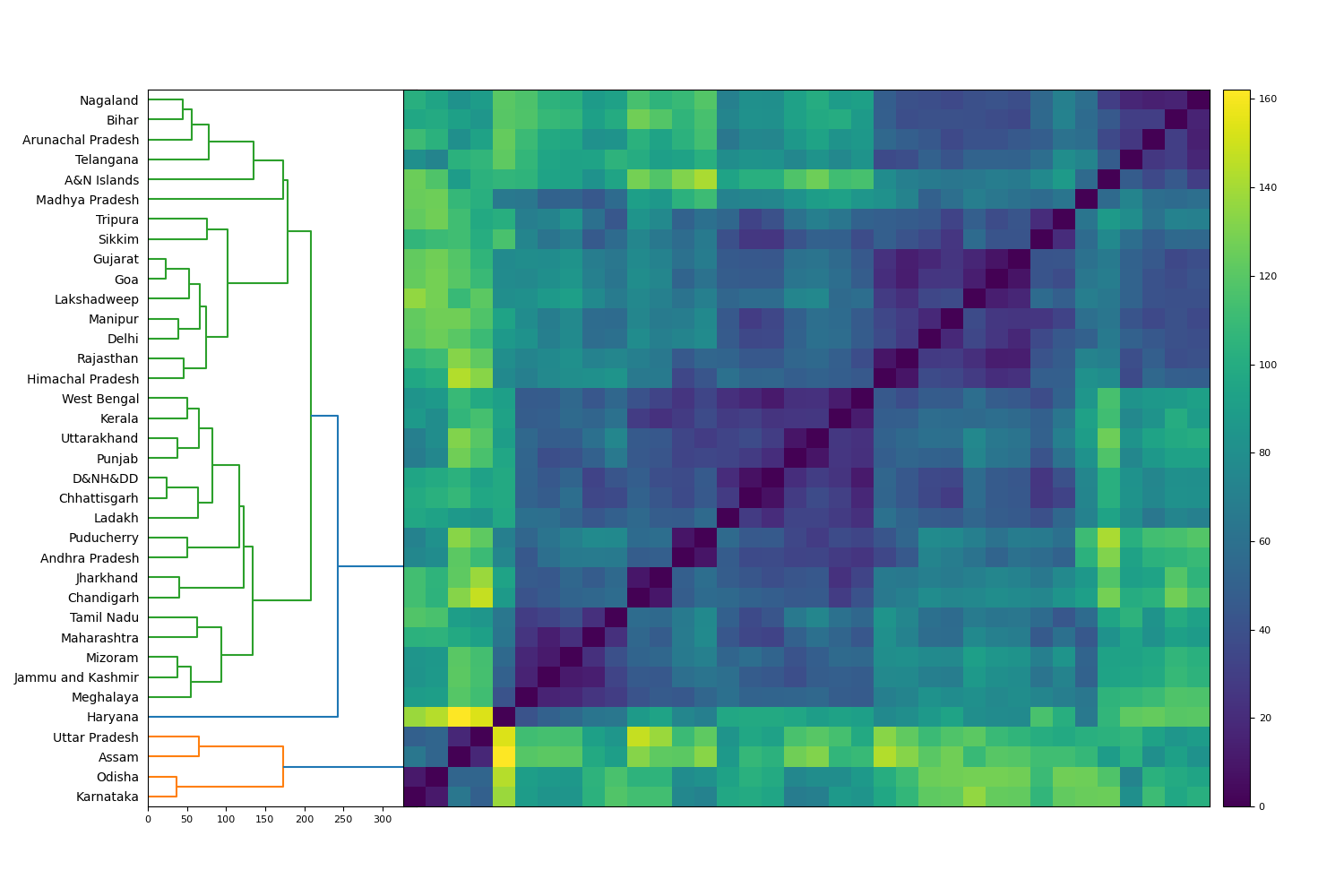}
        \caption{}
        \label{fig:India_TP_dendrogram}
    \end{subfigure}
    \begin{subfigure}[b]{0.7\textwidth}
        \includegraphics[width=\textwidth]{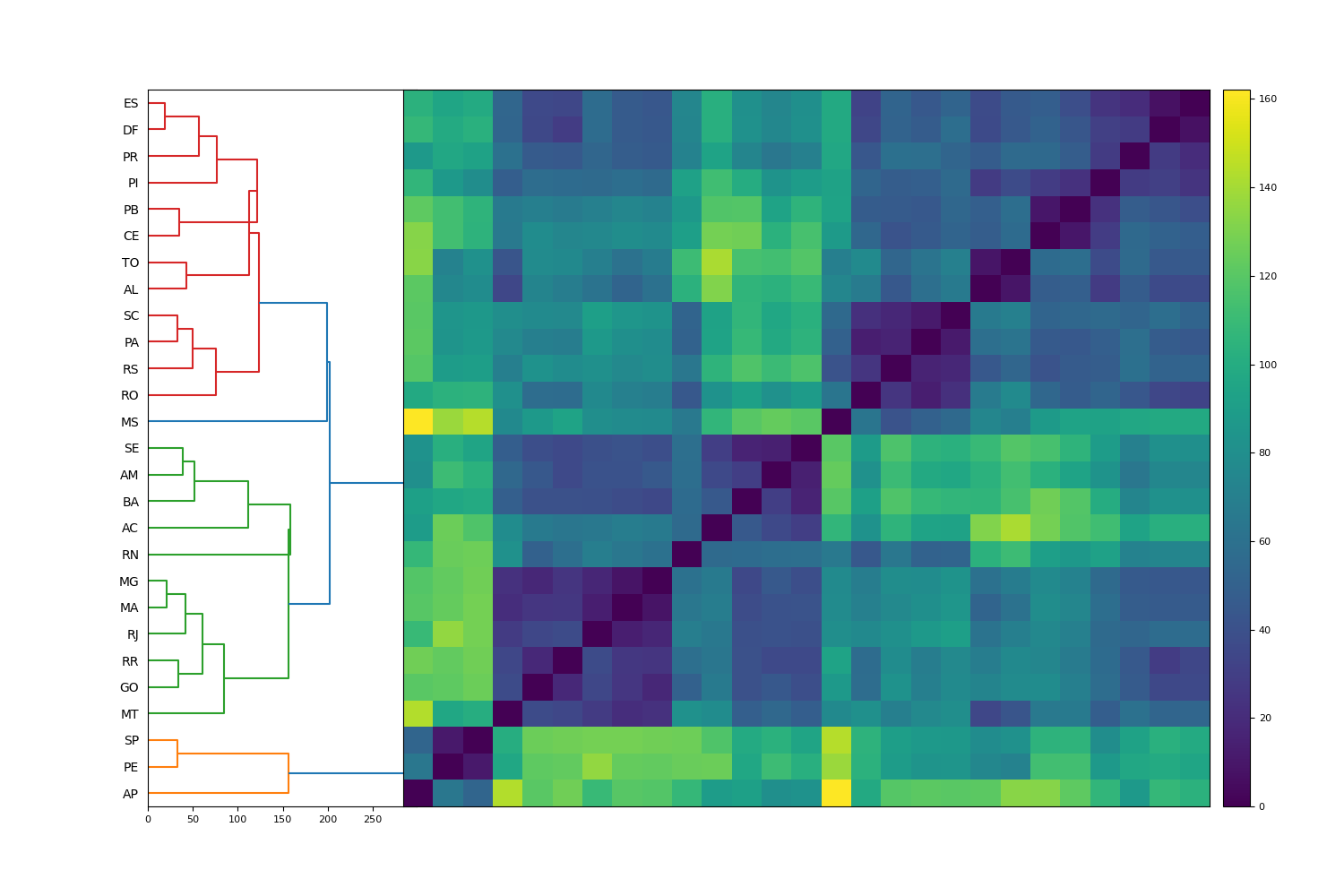}
        \caption{}
        \label{fig:Brazil_TP_dendrogram}
    \end{subfigure}
    \caption{Hierarchical clustering on the matrix $M^{TP}$, defined in Section \ref{sec:Turningpoints}, between the individual sets of (a) US states (b) Indian states (c) Brazilian states. A broadly similar cluster structure is observed between the US and India, while Brazil's structure is quite different.}
    \label{fig:TurningPointDendrograms}
\end{figure*}

\begin{figure}
    \centering
    \includegraphics[width=1.0\textwidth]{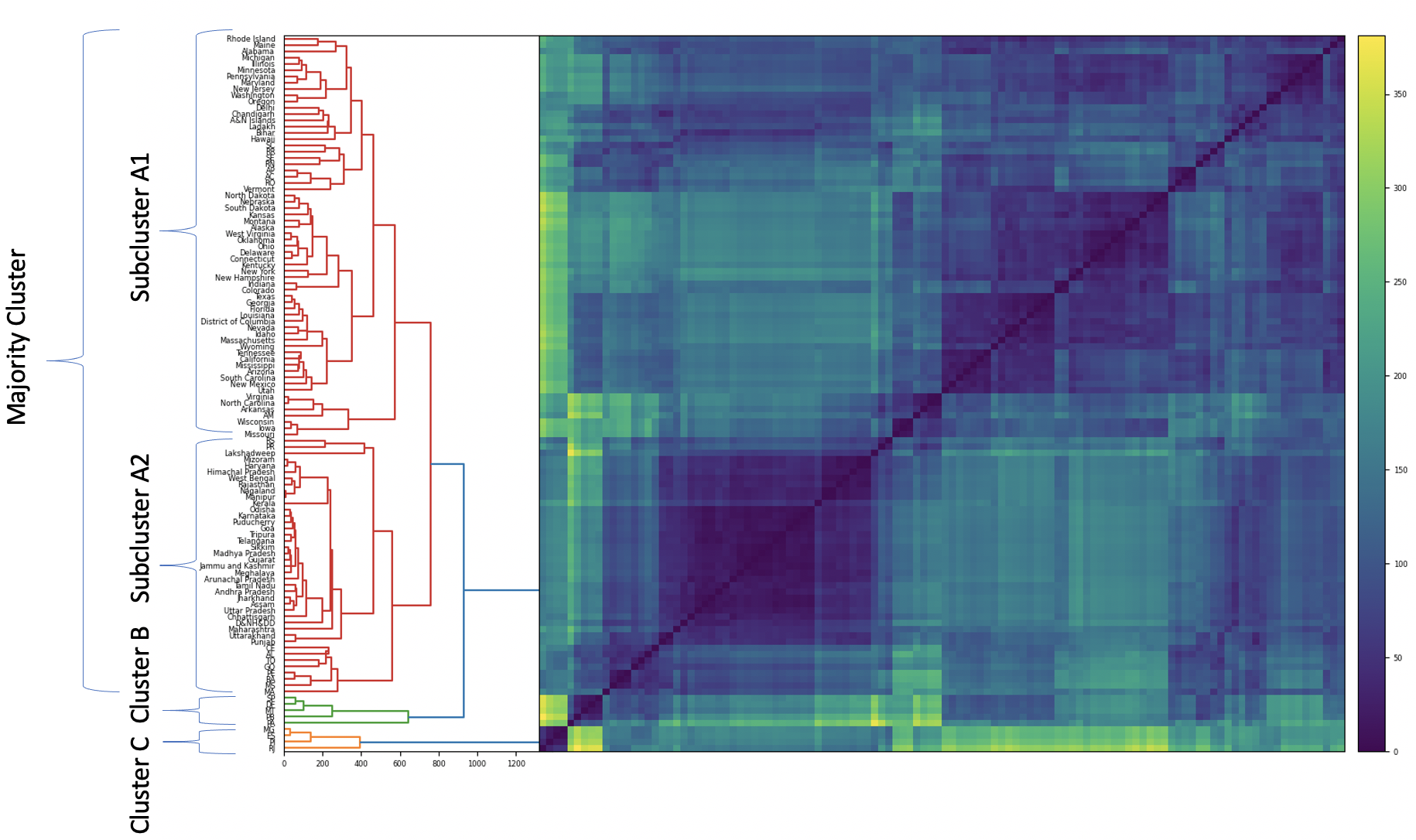}
    \caption{Hierarchical clustering on the matrix $M^{TP}$, defined in Section \ref{sec:Turningpoints}, for all $N=114$ states in our collection. The majority cluster contains two subclusters A1 and A2, broadly consisting of US states and Indian states, respectively. Brazilian states, labelled with two letters for visibility, are interleaved among A1 and A2 and also the two outlier clusters B and C.}
    \label{fig:TP_MJ_Dend}
\end{figure}

\begin{table}
\begin{center}
\begin{tabular}{ |p{1.3cm}|p{2.2cm}|p{3.5cm}|}
%  \hline
%  \multicolumn{4}{|c|}{Country and June-July density changes} \\
 \hline
 Country & Median $T_1$ & Standard deviation $T_1$ \\
 \hline
 US & 92 & 76.9 \\
 India & 231 & 63.6 \\
 Brazil & 143 & 109 \\
\hline
\end{tabular}
\caption{Median and standard deviation of the length of the first wave $T_1$, defined in Section \ref{sec:Turningpoints} and measured by the first non-trivial trough. The US has the shortest first wave, India has the longest, while Brazil exhibits the greatest variability.}
\label{tab:TP1_table}
\end{center}
\end{table}

\begin{figure}
    \centering
    \includegraphics[width=0.75\textwidth]{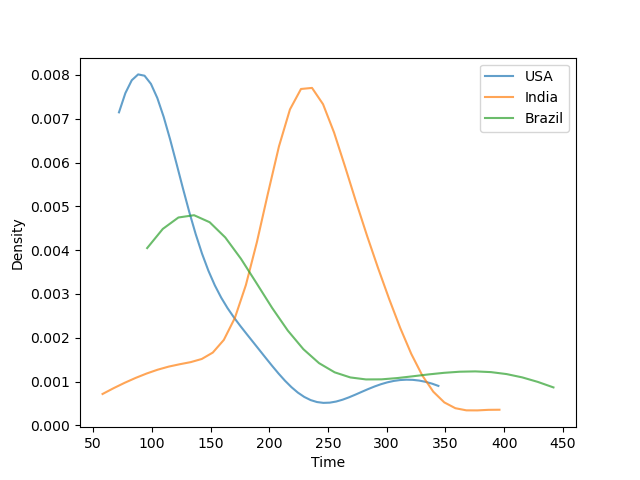}
    \caption{Kernel density estimates of distributions of the first wave length $T_1$, defined in Section \ref{sec:Turningpoints}, over each country. The US exhibits the smallest first wave length, India the greatest, while Brazil has the greatest variability.}
    \label{fig:Distribution_TP_1}
\end{figure}

\section{Offsets between cases and deaths}
\label{sec:Offsets}

In this section, we combine the motivating questions from the previous two sections: the different wave behaviour of the virus, and the time-varying properties of cases, deaths and mortality rates by states. Here, we investigate various methods to quantify and analyse the changing offset between cases to deaths in the different waves of the pandemic in the three countries under consideration. To standardise our comparison of offsets between constituent states, we consider a uniform partition into waves for each entire country. That is, let $x(t)$ be the new daily case time series for an entire country (total counts for the US, India, or Brazil). As in the previous section, we use the methodology of \cite{james2020covidusa}, detailed in \ref{appendix:turningpoint}, to divide each aggregated country's case time series into a first, second and possibly third wave. Let $T_0=1$, $T_1$ be the first non-trivial trough, and $T_2$ be the second non-trivial trough, if it exists. For India and Brazil, this does not exist, so we set $T_2=T$. Then the interval $T_0\leq t \leq T_1$ represents the first wave, $T_1\leq t \leq T_2$ the second wave, and in the case of the US only, $T_2\leq t \leq T$ represents the third wave. For notational convenience, we set $T_3=T$ for the US. Thus, the $k$th wave can be described by the interval $[T_{k-1}, T_k]$, where $k=1,2$ for India and Brazil and $k=1,2,3$ for the US. These turning points for the three country's aggregated cases are displayed in Figure \ref{fig:Total_cases}.

\begin{figure*}
    \centering
    \begin{subfigure}[b]{0.47\textwidth}
        \includegraphics[width=\textwidth]{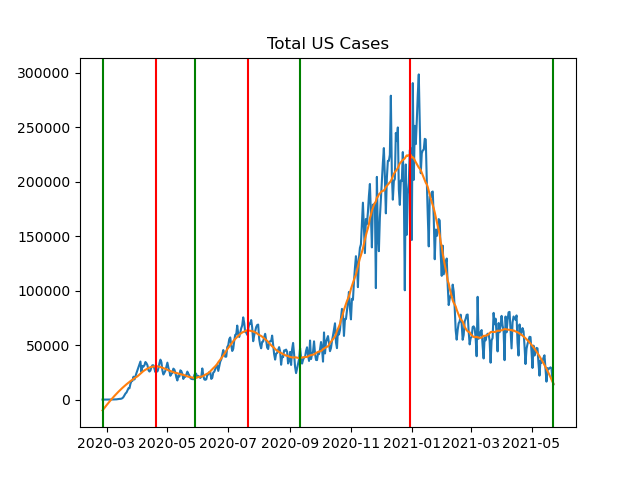}
        \caption{}
        \label{fig:US_cases}
    \end{subfigure}
    \begin{subfigure}[b]{0.47\textwidth}
        \includegraphics[width=\textwidth]{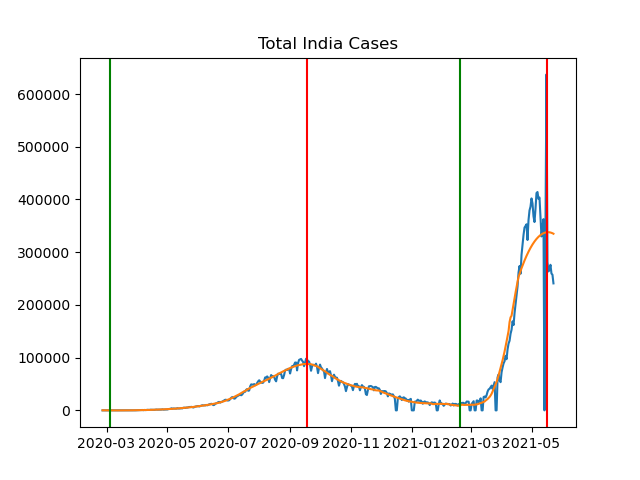}
        \caption{}
        \label{fig:India_cases}
    \end{subfigure}
    \begin{subfigure}[b]{0.47\textwidth}
        \includegraphics[width=\textwidth]{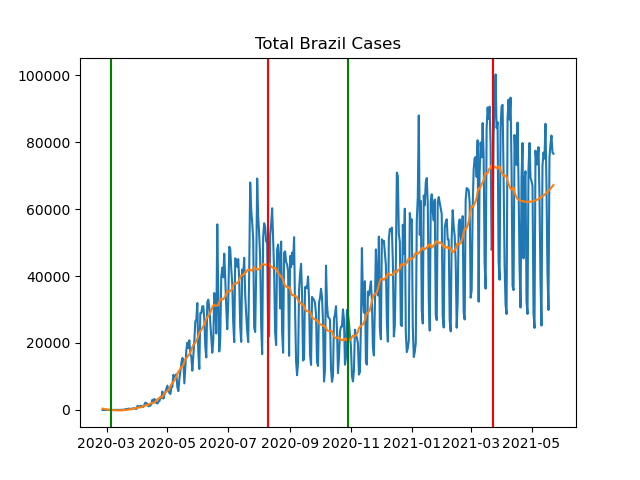}
        \caption{}
        \label{fig:Brazil_cases}
    \end{subfigure}
    \caption{New daily case time series and determined turning points, defined in Section \ref{sec:Turningpoints}, for (a) the US (b) India (c) Brazil.}
    \label{fig:Total_cases}
\end{figure*}

We apply five different methods to estimate suitable values of the offset $\tau_k$ between case and death time series for each wave in each country. Each method determines an appropriate offset using case and death data only between $a=T_{k-1}$ and $b=T_k$. Let $T^* = b - a$ be the length of this interval. We describe the five methods below.

\begin{enumerate}
    \item \textbf{Affinity matrices}: 
For a given wave and country, let the offset $\alpha_k$ be chosen as follows: on each day $t$, let $D_X(t), D_Y(t)$ be the matrices of differences between cases and deaths, respectively. That is, $D_X(t)$ is an $N \times N$ matrix defined by $D_X(t)_{ij}=|x_i(t) - x_j(t)|$, where indices $i,j$ range over the states of one country, and similarly for $D_Y(t)$. To any distance matrix $D$, we can assign a corresponding affinity matrix $A$ defined by
\begin{align}
    A_{ij}=1 - \frac{D_{ij}}{\max D}.
\end{align}
Let $\text{Aff}_X(t)$ and  $\text{Aff}_Y(t)$ be the affinity matrices corresponding to $D_X(t), D_Y(t)$, respectively. Given an offset $\alpha$, with $0 < \alpha < T^*$, let the \emph{normalised total affinity difference} be defined as
\begin{align}
\frac{1}{T^* - \alpha} \sum_{a\leq t \leq b - \alpha} \|\text{Aff}_X(t) - \text{Aff}_Y(t + \alpha)\|.
\label{eq:affinitydiff}
\end{align}
The matrix norm is the same as defined in (\ref{eq:matrixnorm}). Then, the \emph{affinity offset} of a wave is defined as the value $\alpha_k$ that minimises this total difference.

\item \textbf{Probability density function (PDF)}:
For a given wave and country, let the offset $\beta_k$ be chosen as follows: on each day, let $p_X(t), p_Y(t)$ be the probability vector for new cases and deaths on day $t$. That is, $p_X(t)$ is a length $N$ vector defined by $p_X(t)_i = \frac{x_i(t)}{\sum_{j=1}^N x_j(t)} $ where $i$ ranges over the states of one country. Given an offset $\beta$, let the \emph{normalised total pdf difference} be defined as
\begin{align}
\frac{1}{T^*- \beta} \sum_{a\leq t \leq b - \beta} \|p_X(t) - p_Y(t + \beta)\|_1,
\label{eq:pdfdiff}
\end{align}
where $\|\|_1$ is the $L^1$ norm between vectors. Then the \emph{pdf offset} of a wave is defined as the value $\beta_k$ that minimises this total difference.

\item \textbf{Wasserstein distance}: 
Again, we assume a given country and wave $a \leq t \leq b$ is under consideration. For each constituent state $i$, let $\lambda_i$ be the offset that minimises the Wasserstein distance, 
    \begin{align}
    \label{eq:wasserstein}
    W_1(f_i(a:b-\lambda_i),g_i(a+\lambda_i:b)),
\end{align}
where $f_i(a:b-\lambda_i)$ is the distribution associated over the interval $a \leq t \leq b - \lambda_i$, as in (\ref{eq:associateddistribution}), and similarly for $g_i(a+\lambda_i:b)$. Then, let $\lambda_k=[\frac{1}{N}\sum_{i=1}^N\lambda_i]$ be the nearest integer to the mean of the estimated offsets for each state $i$.

\item \textbf{Energy distance}: 
Using similar notation as the above method, for each constituent state $i$, let $\mu_i$ be the offset that minimises the \emph{energy distance} \cite{Szkely2013}, 
\begin{align}
    \label{eq:energy}
    D^2(f_i(a:b-\mu_i),g_i(a+\mu_i:b)),
\end{align}
where $f_i(a:b-\mu_i)$ and $g_i(a+\mu_i:b)$ are distributions defined above and $D^2$ is the $L^2$ integral norm between the associated cumulative distribution functions \cite{Szkely2013}. Then, let $\mu_k=[\frac{1}{N}\sum_{i=1}^N\mu_i]$ analogously as before.

\item \textbf{Normalised inner product}: 
Using similar notation as the above method, for each constituent state $i$, let $\nu_i$ be the offset that minimises the normalised inner product $<.,.>_n$, defined as 
\begin{align}
    \label{eq:innerproduct}
    <x_i(a:b-\nu_i),y_i(a+\nu_i:b)>_n \\ =\frac{x_i(a)y_i(a+\nu_i)+...+x_i(b-\nu_i)y_i(b)}{(x_i(a)^2+...+x_i(b-\nu_i)^2)^\frac{1}{2}(y_i(a+\nu_i)^2+...+y_i(b)^2)^\frac{1}{2}.}
\end{align}
Then, let $\nu_k=[\frac{1}{N}\sum_{i=1}^N\nu_i]$ analogously as before.

\end{enumerate}

Thus we have offsets $\tau_k \in \{\alpha_k, \beta_k, \lambda_k, \mu_k, \nu_k\}$, for each country and wave $k$. Each of these methods considers case and death data on a state-by-state basis, taking into account the federal structure of each country. We remark that the affinity matrix and PDF methods share common features of analysing relationships between different states' proportional sizes of case and death counts. Also, the Wasserstein and energy methods share common features of truncating time series and computing distances between distributions.

Before we present the results of this methodology, we present a proposition that demonstrates our methods work well in the case of simulated data.

\begin{prop}
Let the multivariate time series of cases and deaths for a federation be $x_i(t)$ and $y_i(t)$. Suppose they have the property that there exists a consistent and proportionate progression from cases to deaths after a time lag of $\tau_0$. That is, 
\begin{align}
\label{eq:assumption}
y_i(t)= \begin{cases}
\gamma x_i(t-\tau_0), t=\tau_0+1,...,T\\
0, t \leq \tau_0,
\end{cases}
\end{align}
where $\gamma<1$ and $\tau_0 \in \mathbb{Z}_{>0}$ are constants. Then, for any wave $[T_{k-1},T_k]$ of length at least $\tau_0$, all five methods above return $\tau_k=\tau_0$. That is, all five methods identify the correct offset for the following simulated example.
\end{prop}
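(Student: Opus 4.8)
The plan is to reduce all five verifications to a single structural consequence of hypothesis (\ref{eq:assumption}): advancing the death series by exactly $\tau_0$ turns it into a uniform positive rescaling of the case series, namely $y_i(t+\tau_0)=\gamma\, x_i(t)$ for every state $i$ and every $t$ with $t+\tau_0>\tau_0$. Because each of the five objectives is built from quantities that are invariant (or extremal) under a common positive rescaling of one of their arguments — normalisation by a maximum entry, by a total count, by $L^2$ norms, or the scale-invariance of Wasserstein/energy distances between \emph{normalised} distributions — each objective will attain its extremal value precisely at offset $\tau_0$. The hypothesis that the wave $[T_{k-1},T_k]$ has length at least $\tau_0$ (i.e.\ $T^\ast=b-a\ge\tau_0$) is what guarantees that the relevant truncated windows and sums are non-empty, so that these statements are meaningful; throughout the ranges in question one has $t+\tau_0>\tau_0$, which is exactly what licenses substituting (\ref{eq:assumption}).

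First I would treat the per-day and vector-alignment methods. For the affinity method, substituting (\ref{eq:assumption}) gives $D_Y(t+\tau_0)_{ij}=|\gamma x_i(t)-\gamma x_j(t)|=\gamma\, D_X(t)_{ij}$; since the affinity transform divides by $\max D$, the factor $\gamma$ cancels and $\mathrm{Aff}_Y(t+\tau_0)=\mathrm{Aff}_X(t)$, so the normalised total affinity difference (\ref{eq:affinitydiff}) is exactly $0$ at $\alpha=\tau_0$, its least possible value. The PDF method is identical in spirit: $p_Y(t+\tau_0)_i=\gamma x_i(t)\big/\!\big(\gamma\sum_j x_j(t)\big)=p_X(t)_i$, so (\ref{eq:pdfdiff}) vanishes at $\beta=\tau_0$. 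For the normalised inner product, at offset $\tau_0$ the shifted death vector equals $\gamma$ times the case vector, a positive scalar multiple, so (\ref{eq:innerproduct}) equals $1$, the maximal value of a cosine-type similarity; here I would note the minor wording point that the criterion selects the offset at which this similarity is extremal (equivalently the angle is minimised), and that optimum is attained at $\nu_i=\tau_0$.

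Next I would handle the two distributional methods, which is where the only real subtlety lies. After aligning the two truncated windows to a common origin so that the offset acts as a \emph{relative} shift, I claim the normalised death distribution $g_i(a+\tau_0\!:\!b)$ coincides exactly with the normalised case distribution $f_i(a\!:\!b-\tau_0)$: inserting $y_i(\cdot)=\gamma x_i(\cdot-\tau_0)$ into (\ref{eq:associateddistribution}) and reindexing by $u=t-\tau_0$, the constant $\gamma$ cancels against the normalising total and the windows $[a,b-\tau_0]$ and $[a+\tau_0,b]$ map onto the same set of relative times carrying identical weights. Hence both the Wasserstein distance (\ref{eq:wasserstein}) and the energy distance (\ref{eq:energy}) vanish at $\lambda_i=\mu_i=\tau_0$ for every $i$. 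Averaging then gives $\tfrac1N\sum_i\tau_0=\tau_0$, and since $\tau_0\in\mathbb{Z}_{>0}$ the nearest-integer rounding is inert, so $\lambda_k=\mu_k=\nu_k=\tau_0$.

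The main obstacle is getting the distributional methods right: one must be explicit that $f_i(a\!:\!b-\lambda)$ and $g_i(a+\lambda\!:\!b)$ are compared on a shared relative time axis, since otherwise the absolute-time Wasserstein distance between a distribution and its $\tau_0$-translate equals $\tau_0$ rather than $0$ and the minimiser need not be $\tau_0$ (for constant case counts the absolute-time objective is in fact minimised at $\lambda=0$). Once re-based, the cancellation of $\gamma$ under normalisation is routine. A secondary point is uniqueness: the argument above shows $\tau_0$ \emph{attains} the global optimum of each objective, so to conclude it is the unique returned value one invokes the generic non-degeneracy that rules out ties — degenerate inputs such as a single case spike make several of the objectives flat in the offset. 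I would finish by confirming that the length hypothesis $T^\ast\ge\tau_0$ is used exactly to keep the window $[a+\tau_0,b]$ and the sums over $a\le t\le b-\tau_0$ non-empty, and to ensure $t+\tau_0>\tau_0$ there so that (\ref{eq:assumption}) applies.
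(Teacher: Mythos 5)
Your proof is correct and follows essentially the same route as the paper's: substitute the lag hypothesis to get $y_i(t+\tau_0)=\gamma x_i(t)$, observe that $\gamma$ cancels under each method's normalisation, and conclude that each objective attains its extremal value (zero for the four distance-type objectives, one for the normalised inner product) at the offset $\tau_0$. Your explicit re-basing of the truncated windows onto a common relative time axis for the Wasserstein and energy methods, and your remark on uniqueness of the optimiser, tighten two points the paper leaves implicit, but they do not alter the structure of the argument.
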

\begin{proof}
Let $[T_{k-1},T_k]$ be a fixed interval of length $T^*$. Then the normalised total affinity difference (\ref{eq:affinitydiff}), evaluated for $\alpha=\tau_0$, produces the value
\begin{align}
\frac{1}{T^* - \tau_0} \sum_{t=T_{k-1}}^{T_k - \tau_0} \|\text{Aff}_X(t) - \text{Aff}_Y(t + \tau_0)\|.
\label{eq:affinitydiff2}
\end{align}
By (\ref{eq:assumption}), $y_i(t+\tau_0)=\gamma x_i(t)$ for all $t$ in the interval $[T_{k-1},T_k - \tau_0]$. Thus, $D_Y(t+\tau_0) = \gamma D_X(t)$. Due to the normalisation process of computing the affinity matrix, this implies $\text{Aff}_X(t) = \text{Aff}_Y(t + \tau_0)$ for all $t$. Thus, the normalised total affinity difference for the value $\alpha=\tau_0$ produces the minimal possible value of zero, so the method selects $\alpha_k=\tau_0$.

Next, for the PDF method, the normalised total pdf difference evaluated for $\beta=\tau_0$ produces
\begin{align}
\frac{1}{T^*- \tau_0} \sum_{t=T_{k-1}}^{T_k - \tau_0} \|p_X(t) - p_Y(t + \tau_0)\|_1.
\label{eq:pdfdiff2}
\end{align}
Again by  (\ref{eq:assumption}), we have $y_i(t+\tau_0)=\gamma x_i(t)$ for all $t$ in the interval $[T_{k-1},T_k - \tau_0]$, so $p_X(t)=p_Y(t+\tau_0)$ for all $t \in [T_{k-1},T_k - \tau_0]$. Thus, the normalised total pdf difference for the value $\beta=\tau_0$ produces the minimal possible value of zero, so the method selects $\beta_k=\tau_0$.

Next, we turn to the Wasserstein and Energy distance methods. Here, we can again show that for the selected offset $\lambda_i=\tau_0$, the corresponding Wasserstein distance
    \begin{align}
    \label{eq:wasserstein2}
    W_1(f_i(T_{k-1}:T_k-\tau_0),g_i(T_{k-1}+\tau_0:T_k)),
\end{align}
is equal to zero. Indeed, $x_i(t)_{T_{k-1} \leq t \leq T_k-\tau_0}$ is a scalar multiple of $y_i(t)_{T_{k-1}+\tau_0 \leq t \leq T_k}$, so when both are normalised to distributions $f_i(T_{k-1}:T_k-\tau_0)$ and $g_i(T_{k-1}+\tau_0:T_k)$ respectively, they coincide. Thus, $\lambda_i=\tau_0$ produces the minimal possible value of zero for the Wasserstein distance and so the method selects $\lambda_i=\tau_0$ for each state $i$, hence $\lambda_k=\tau_0$. The same argument holds \textit{mutatis mutandis} for the Energy distance.

Finally, for the normalised inner product method, the same reasoning shows that the normalised inner product achieves its maximal value of 1 when $\nu_i=\tau_0$, so the method selects $\nu_i=\tau_0$ for each state $i$. Hence, $\nu_k$ is analogously chosen to be equal to $\tau_0$.

We remark that the procedure of truncating the interval $[T_{k-1}, T_k]$ to $[T_{k-1}, T_k-\tau_k]$ for the case time series and $[T_{k-1+\tau_k}, T_k]$ for the death time series is essential for the proof to work as above. Indeed, in this simulated example, the death time series $y_i(t)$ has exactly $\tau$ days of leading zeroes before it coincides with a shifted constant times $x_i(t)$, and the truncation is necessary for the methods to select the correct offset.

\end{proof}

\begin{table}
\begin{center}
\begin{tabular}{ |p{3.5cm}|p{1.3cm}|p{1.3cm}|p{1.3cm}|}
%  \hline
%  \multicolumn{4}{|c|}{Country and June-July density changes} \\
 \hline
 Methodology & Wave 1 & Wave 2 & Wave 3 \\
 \hline
 Affinity (US)  & 6 & 37 & 16 \\
 PDF (US) & 5 & 23 & 16 \\
 Wasserstein (US)  & 11 & 19 & 41 \\
  Energy (US)  & 9 & 17 & 38 \\
 Inner product (US)  & 10 & 20 & 29 \\
  Affinity (India)   & 11 & 8 & n/a \\
 PDF (India)   & 8 & 7 & n/a \\
 Wasserstein (India)   & 32 & 5 & n/a \\
  Energy (India)   & 32 & 5 & n/a \\
 Inner product (India)   & 13 & 8 & n/a \\
 Affinity (Brazil)   & 9 & 9 & n/a \\
 PDF (Brazil)   & 9 & 9 & n/a \\
 Wasserstein (Brazil)   & 18 & 13 & n/a \\
 Energy (Brazil)   & 15 & 11 & n/a \\
 Inner product (Brazil)  & 12 & 21 & n/a \\
\hline
\end{tabular}
\caption{Offsets between cases and deaths by country and wave of the pandemic, computed with five different methods, as described in Section \ref{sec:Offsets}. Only the US is determined to have a third wave within our period of analysis.}
\label{tab:Wave_offsets}
\end{center}
\end{table}

Table \ref{tab:Wave_offsets} documents the wave-specific offsets for all three countries among our five methods. We observe broad similarity across all countries and waves between the results obtained by pairs of related methods (affinity and PDF, Wasserstein and energy). Each country presents a unique pattern in the length of their progression from cases to deaths for each wave of the pandemic. First, the US is the only country determined to experience three waves of COVID-19 cases within our analysis window. For all five methods, the first wave produces a significantly lower offset than the second and third waves of COVID-19. The timing of the first wave corresponds to the first half of 2020, when many US states (especially those located in the Northeast) were overwhelmed by early case numbers. As a result, many cases went undetected, and hospitals were unable to administer optimal care to patients. Furthermore, early in the pandemic, there was greater uncertainty within the medical community on suitable treatments for COVID-19 patients.

India, which exhibits two waves of COVID-19 in our analysis window, features almost the opposite observation. As shown in Table \ref{tab:TP1_table}, the length of the first wave in India was $\sim$2.5 times that of the US, and it exhibited a more gradual progression (and subsequent decline) in daily cases until states reached their first peak and trough, respectively. Although much shorter, the second wave was more severe among Indian states - with universally rapid growth in cases and deaths. All five optimisation methods determined the offset of the second wave to be shorter than that of the first wave. This mirrors our finding in the case of the US: when states are overwhelmed with COVID-19, hospitals become overwhelmed with cases, and many patients go undetected - this leads to a decrease in the length of the offset between cases and deaths. This can most likely be explained by latent COVID-19, the inability to access critical equipment (such as ventilators), and inferior treatment within hospitals.

Brazil has quite a different finding again, with little consistency in the offset trend between its first and second waves. Several reasons may explain the variability in our estimates. First, the Brazilian data is quite noisy, with more missing data and reporting issues than the US and India. Second, the variability in the distribution of states' $T_1$ values may suggest limited collective consistency in offset trends among the Brazilian states. Accordingly, we see no clear trend in offset behaviours as we progress from the first to the second wave of the outbreak.

\section{Discussion}
\label{sec:conclusion}

%1. Motivation.
In this paper, we perform a detailed analysis of the three countries most impacted by COVID-19, the US, India and Brazil. Given COVID-19's severe yet varied impact on countries worldwide, our motivation is to understand the differences in the dynamics of the virus' propagation among the world's three worst affected countries. We seek to study both internal structural similarity between states within each country and differences between the countries with respect to several attributes around COVID-19. Comparing the structural dynamics of separate countries' COVID-19 outbreaks may provide insights into the influence different governments, cultures and healthcare systems have had in the evolution of the pandemic. In addition to this explicit contrast, we wanted to explore variability within each country, namely similarity between countries' constituent states. %In this discussion section, we discuss our methodological contributions and the non-trivial findings gained by our analyses.

%2. Trajectories/anomalies
First, we study the similarity between case, death and mortality rate trajectories produced by each of our three countries' constituent states. In Section \ref{sec:Trajectories}, we offer methodological contributions as well as non-trivial findings regarding heterogeneity between states in each federation. Our procedure in Algorithm \ref{algorithm} not only identifies a sequence of the most anomalous elements (in this case states) of a collection, it also produces an easily interpretable decreasing curve quantifying the collective heterogeneity. This procedure is robust to the existence of one or even several outlier elements. By the scale of the curves displayed in Figure \ref{fig:Sequential_anomaly_removal_cases}, one can immediately see that India exhibits the greatest heterogeneity between states with respect to the three trajectories analysed, particularly rolling mortality rates. This is a robust finding that consistently holds even when we remove anomalous states, and highly non-trivial given the findings of Section \ref{sec:Turningpoints} discussed below. The specific identification of the most anomalous states is also non-obvious, revealing different patterns in each federation. In the US, we find that the most anomalous behaviour is consistently located in the Northeast. In India, the state Lakshadweep is consistently identified as most anomalous in cases, deaths and mortality. In Brazil, there is less consistency in the type of anomalies identified among our three attributes.

The insights generated above concern broad structure in the data on a state-by-state basis. We have combined existing statistical learning methodologies (such as clustering), a new distance between trajectories as well as a new algorithmic approach to identify specific states and quantify overall heterogeneity, with robustness to outliers. The insights presented in this manuscript would not be possible without a combination of existing (rather sophisticated) and new (rather bespoke) procedures, all carefully considered for the application. More broadly, most COVID-19 data consumed by the general public is reported at the national level; most variation within states is ignored, especially a detailed quantification of heterogeneity. Our methods combine non-trivial mathematical investigation with data sets that are typically not examined in detail at the state level.

%3. Turning points 
In Section \ref{sec:Turningpoints}, we apply our turning point algorithm to study wave behaviours among the three countries. In the US, where three waves of COVID-19 cases are observed, a median first wave length of 92 days is found among the distribution of US states. By contrast, Indian states produced a median first wave length of 231 days, with a lower variance than the US, and just two waves of COVID-19 cases overall. In Brazil, where two waves of the cases were also identified, the median length of states' first wave was 143, with high variance. Our analysis suggests that US and Indian states exhibit stronger characteristic behaviours than those exhibited by Brazil. Indeed, clustering reveals that the US and India are quite dissimilar in wave behaviour, almost entirely clustering among themselves, while Brazil is quite heterogeneous, with some states similar to US states, some similar to Indian states, and some outlier states.

These findings are highly non-trivial without undertaking judicious mathematical analysis as we have done. Numerous papers on COVID-19 simply estimate the duration of waves by inspection or other unreliable methods, while we use a careful algorithm to do so. Unlike most work, we do so on a state-by-state basis, and thus must deal with data issues such as anomalous counts and missing values. Our findings contrast notably with Section \ref{sec:Trajectories} and are highly non-trivial to guess. While it is predictable that US and Indian states exhibit relatively strong characteristic wave behaviours among themselves, it is certainly non-trivial that Brazilian states interleave between US and Indian states with respect to wave behaviour, and that the distribution of first wave length among Brazilian states (Figure \ref{fig:Distribution_TP_1}) is so broad. Further, it is striking that Section \ref{sec:Trajectories} reveals the greatest heterogeneity between Indian states in terms of trajectories, but Section \ref{sec:Turningpoints} demonstrates the least variance in first wave length (Table \ref{tab:TP1_table}). This is not necessarily contradictory but is highly non-obvious: case and death curves exhibit substantial differences but the overall wave pattern is more uniform across India.

%-----------------------

%4. Offsets
Finally, Section \ref{sec:Offsets} introduces new optimisation methodologies to study the progression of COVID-19 cases to deaths in each of our three countries' waves of the pandemic. We believe this is the first work to explicitly acknowledge that the progression from cases to deaths may vary between different waves of the pandemic and aim to study this. In the US, we highlight a significantly longer period between diagnosis and death in the second and third waves of COVID-19 cases. This finding is consistent among all five optimisation methods. In India, all five methods demonstrate a sharp reduction in the length of this offset as we progress from the first to the second wave. In Brazil, we find limited consistency among our methods, with no clear takeaway regarding the change in the length of the COVID-19 case life cycle, in the first and second waves. In aggregate, our analysis suggests that when countries become overwhelmed with COVID-19 cases, the length of the case-to-death progression decreases. This may be due to overwhelmed hospital systems, sub-optimal medical treatment, limited access to medical resources such as ventilators and an increase in undetected cases. We also include theoretical validation of our methodology, which is non-trivial due to the truncation of time series inherent in the case and death data (that is, death data lag behind cases and non-zero counts begin later).

There are several reasons why these determinations of offsets between cases and deaths are not particularly obvious. First, they are computed in a high dimensional manner with several methods that use the federal structure of the three countries. Second, the changes between waves of these offsets are different for all three federations, which we believe shows the impossibility of a straightforward prediction of their behaviour. Algorithmic techniques must be used to identify time series turning points (corresponding to waves of the pandemic), and the relationship between cases and deaths is fluid - varying over time, across countries and between countries' constituent states and territories. Although the offset in the progression from COVID-19 cases to deaths is only one facet of a hugely complex global pandemic, it is of great importance to understand for the future treatment and management of COVID-19 cases. COVID-19 data follows a causal structure: any COVID-19 case will ultimately progress into either the recovered or death category. This causal structure is typically modelled via SIRD models and their variants described in Section \ref{sec:Introduction}. These have their utility, but are not ideal to study the multi-wave dynamics of COVID-19 brought about by regularly shifting government restrictions and community behaviour. We choose to exclusively address the transition from cases to deaths without the strong parametric assumptions in SIRD models; we believe this progression to be of direct importance in treating COVID-19 patients currently burdening many countries' healthcare systems.

\subsection{Future work}

There are many avenues for potential future work, in both methodological and applied contexts. First, one could investigate the reasons for more or less heterogeneity among constituent states for various countries. For example, one could explore why Brazil's states experienced rather different outcomes relative to wave behaviours and progression from cases to deaths. In this paper, we highlight that these differences are far more significant than the USA and India. Indeed, Brazil's human development index (HDI) of 0.765 is between that of the US (0.926) and India (0.645), and it is conceivable that development among Brazilian states differs more than that among the US or India. This, along with other predictors, may help construct supervised and unsupervised learning algorithms where relationships can be learned and associations can be formed, respectively.

Next, the methods that are introduced in this paper could be extended. Although the offsets in this paper have been implemented in discrete time partitions, these methods could conceivably be implemented in a rolling manner, where a continuous (time-varying) offset may be estimated. Furthermore, the theoretical aspects of these estimators could be further investigated, and tested on data generated from a variety of data generating processes. This may include noise generated from a wide variety of distributions, adversarial data such as extreme points and outliers, and so on. In addition, future work could further explore the aforementioned causal structure in the data, including offsets between time series of COVID-19 cases, counts of recovered patients (including those who experience ``long Covid'' \cite{Mahase2020}) and COVID-19 deaths. One could compare the offsets between COVID-19 cases and deaths, and COVID-19 cases and recovered patients separately - and then study whether there is a latent relationship between these two offsets, and more specifically, study how they evolve with time. Our descriptive and nonparametric analysis could conceivably be incorporated with judiciously chosen SIRD models on a wave by wave basis.

At the time of writing this paper, many parts of the world are currently experiencing a fourth wave of COVID-19 cases. Many European countries such as Austria and Germany are attracting a substantial amount of publicity, regarding their growth in new daily COVID-19 cases. It would be of great interest to compare the heterogeneity of COVID-19 epidemiology within differing states or regions of these countries, and estimate the offset in the progression from cases to deaths during the fourth wave of the pandemic. In particular, with the appropriate data, one could distinguish between the vaccinated and unvaccinated populations. % The difference between these two sub-groups may be vastly different, and may provide further evidence for unvaccinated individuals, encouraging them to get vaccinated.}

\section{Conclusion}
\label{sec:finalconclusion}

%5. Overall
Overall, we have identified numerous features that characterise the nature of the pandemic within the US, India and Brazil. India exhibits the greatest heterogeneity in its trajectories, and yet simultaneously the most homogeneity in its wave behaviours due to a very long first wave and a rapid second wave in almost every state. The US and India cluster quite separately in trajectory and wave behaviours, while Brazilian states are interleaved between them, characterised by the greatest variance in wave lengths. A similar distinction is observed in offsets, where the US case-to-death progressions drastically lengthen between first and subsequent waves, the reverse holds for India, while Brazil is again a mixture of the two.

Throughout this work, we have identified specific states within the three federations as the most anomalous and determined various non-trivial features in the federations' COVID-19 behaviour, including heterogeneity of trajectories, wave behaviour, and the progression from cases to deaths. New methodologies have been presented for this purpose, including the ability to more robustly determine distances between trajectories and determine patterns in overall heterogeneity without too much vulnerability to outliers. We have identified numerous avenues for future work to apply these methods in new contexts, such as Europe's fourth wave, or to undertake closer analysis with researchers from other disciplines to investigate some of the policy measures or regional features that could be contributing to these patterns.

%----- END OF PAPER------------------

\section*{Data availability}
Daily COVID-19 case and death counts for the US, India and Brazil can be found at the New York Times \cite{datasourcenyt}, PRS Legislative Research \cite{indiadata} and the Brazilian Ministery of Health \cite{brazildata}, respectively.

\section*{Funding sources}
This research did not receive any specific grant from funding agencies in the public, commercial, or not-for-profit sectors.

\appendix

\section{Turning point methodology}
\label{appendix:turningpoint}

In this section, we provide more details for identifying turning points of a new case time series $x(t)$. First, some smoothing of the counts is necessary due to data irregularities and discrepancies between different data sources. There are consistently lower counts on the weekends and some negative counts due to retroactive adjustments. A Savitzky-Golay filter ameliorates these issues by combining polynomial smoothing with a moving average computation - this moving average eliminates all but a few small negative counts; we then replace these negative counts with zero. This yields a smoothed time series $\hat{x}(t) \in \mathbb{R}_{\geq 0}.$ Subsequently, we perform a two-step process to select and then refine a non-empty set $P$ of local maxima (peaks) and $T$ of local minima (troughs).

Following \cite{james2020covidusa}, we apply a two-step algorithm to the smoothed time series $\hat{x}(t)$. The first step produces an alternating sequence of troughs and peaks, beginning with a trough at $t=1$, when there are zero cases. The second step refines this sequence according to chosen conditions and parameters. The primary conditions to identify a peak or trough, respectively, in the first step, are the following:
\begin{align}
\label{baddefnpeak}
\hat{x}(t_0)&=\max\{\hat{x}(t): \max(1,t_0 - l) \leq t \leq \min(t_0 + l,T)\},\\
\label{baddefntrough}\hat{x}(t_0)&=\min\{\hat{x}(t): \max(1,t_0 - l) \leq t \leq \min(t_0 + l,T)\},
\end{align}
where $l$ is a parameter to be chosen. Following \cite{james2020covidusa}, we select $l=17$, which accounts for the 14-day incubation period of the virus \cite{incubation2020} and less testing on weekends. Defining peaks and troughs according to this definition alone has several flaws, including the potential for two consecutive peaks.

Instead, we implement an inductive procedure to select an alternating sequence of peaks and troughs. Suppose $t_0$ is the last determined peak. We search in the period $t>t_0$ for the first of two cases: if we find a time $t_1>t_0$ that satisfies (\ref{baddefntrough}) and a non-triviality condition $\hat{x}(t_1)<\hat{x}(t_0)$, we add $t_1$ to the set of troughs and proceed from there. If we find a time $t_1>t_0$ that satisfies (\ref{baddefnpeak}) and  $\hat{x}(t_0)\geq \hat{x}(t_1)$, we ignore this lower peak as redundant; if we find a time $t_1>t_0$ that satisfies (\ref{baddefnpeak}) and  $\hat{x}(t_1) > \hat{x}(t_0)$, we remove the peak $t_0$,  replace it with $t_1$ and proceed from $t_1$. A similar process applies from a trough at $t_0$. 

At this point, a time series is assigned an alternating sequence of troughs and peaks. However, some turning points are immaterial and should be excluded. The second step is a flexible approach introduced in \cite{james2020covidusa} for this purpose. In this paper, we introduce new conditions within this framework. First, let $t_m$ be the global maximum of $\hat{x}(t)$. If this is not unique, we declare $t_m$ to be the first global maximum. This point $t_m$ is always declared a peak during the first step detailed above. Given any other peak $t_1$, we compute the peak ratio $\frac{\hat{x}(t_1)}{\hat{x}(t_m)}$. We select a parameter $\delta$, and if $\frac{\hat{x}(t_1)}{\hat{x}(t_m)}<\delta$, we remove the peak $t_1$. If two consecutive troughs $t_0,t_2$ remain, we remove $t_0$ if $\hat{x}(t_0)>\hat{x}(t_2)$, and remove $t_2$ if $\hat{x}(t_0)\leq\hat{x}(t_2)$. That is, we ensure the sequence of peaks and troughs remains alternating. In our implementation, we choose $\delta=0.01.$ Unlike \cite{james2020covidusa}, we remove earlier peaks, not just subsequent peaks, according to this condition.

Finally, we use the same \emph{log-gradient} function between times $t_1<t_2$, defined as
\begin{align}
\label{loggrad}
   \loggrad(t_1,t_2)=\frac{\log \hat{x}(t_2) - \log \hat{x}(t_1)}{t_2-t_1}.
\end{align}
The numerator equals  $\log(\frac{\hat{x}(t_2)}{\hat{x}(t_1)})$, a "logarithmic rate of change." Unlike a standard rate of change given by $\frac{\hat{x}(t_2)}{\hat{x}(t_1)} -1$, the logarithmic change is symmetrically between $(-\infty,\infty)$. Let $t_1,t_2$ be adjacent turning points (one a trough, one a peak). We choose a parameter $\epsilon=0.01$;  if
\begin{align}
    |\loggrad(t_1,t_2)|<\epsilon,
\end{align}
that is, the average logarithmic change is less than 1\%, we remove $t_2$ from our sets of peaks and troughs. If $t_2$ is not the final turning point, we also remove $t_1$.

\bibliographystyle{_elsarticle-num-names}
\bibliography{__References}
%\end{nolinenumbers}
\end{document}